%% LyX 2.0.2 created this file.  For more info, see http://www.lyx.org/.
%% Do not edit unless you really know what you are doing.
 \documentclass[a4paper]{article}
  %for A4 paper format use option "a4paper", for US-letter use option "letterpaper"
  %for british hyphenation rules use option "UKenglish", for american hyphenation rules use option "USenglish"
 % for section-numbered lemmas etc., use "numberwithinsect"
\usepackage{amsthm}

\theoremstyle{plain}
\newtheorem*{theorem*}{Theorem}
\newtheorem*{definition*}{Definition}

\theoremstyle{plain}
\newtheorem{thm}{\protect\theoremname}
  \theoremstyle{plain}
  
  \theoremstyle{plain}
  \newtheorem{lemma}[thm]{\protect\lemmaname}
  \theoremstyle{plain}
  
  \theoremstyle{plain}
  
  \theoremstyle{plain}
  \newtheorem*{lem*}{\protect\lemmaname}
  \newtheorem*{cor*}{\protect\corollaryname}
 
\usepackage{microtype}%if unwanted, comment out or use option "draft"
\usepackage{amsmath}
\usepackage{amssymb}

\usepackage{thmtools}
\usepackage{thm-restate}
\usepackage{cleveref}

%%%%%%%%%%%%%%%%%%%%%%%%%%%%%% User specified LaTeX commands.
\usepackage{algorithm}
\usepackage{algorithmic}

  \providecommand{\lemmaname}{Lemma}
\providecommand{\theoremname}{Theorem}

\makeatother

  \providecommand{\corollaryname}{\inputencoding{latin9}Corollary}
  \providecommand{\factname}{\inputencoding{latin9}Fact}
  \providecommand{\lemmaname}{\inputencoding{latin9}Lemma}
\providecommand{\theoremname}{\inputencoding{latin9}Theorem}

%\graphicspath{{./graphics/}}%helpful if your graphic files are in another directory

\bibliographystyle{plain}% the recommended bibstyle

\begin{document}

% Author macros %%%%%%%%%%%%%%%%%%%%%%%%%%%%%%%%%%%%%%%%%%%%%%%%
\title{Submodular Stochastic Probing on Matroids}

\author{Marek Adamczyk
\thanks{Department of Computer, Control, and Management Engineering, Sapienza University of Rome, Italy; e-mail: \texttt{adamczyk@dis.uniroma1.it}}
\and Maxim Sviridenko
\thanks{Department of Computer Science, University of Warwick, United Kingdom, 			\texttt{M.I.Sviridenko@warwick.ac.uk}}
\and Justin Ward
\thanks{Department of Computer Science, University of Warwick, United Kingdom, 
			\texttt{J.D.Ward@warwick.ac.uk}}}

%\Copyright{Marek Adamczyk and Maxim Sviridenko and Justin Ward}%mandatory. LIPIcs license is "CC-BY";  http://creativecommons.org/licenses/by/3.0/

%\subjclass{Theory of computation Stochastic approximation}% mandatory: Please choose ACM 1998 classifications from http://www.acm.org/about/class/ccs98-html . E.g., cite as "F.1.1 Models of Computation". 
%\keywords{approximation algorithms, stochastic optimization, submodular optimization, matroids, iterative rounding}% mandatory: Please provide 1-5 keywords

\global\long\def\opstyle#1{\mathbb{#1}}

\global\long\def\ex#1{\mathbb{E}\left[#1\right]}

\global\long\def\xp#1{\mathbb{E}\left[#1\right]}

\global\long\def\pr#1{\opstyle P \left[ #1 \right]}

\global\long\def\prcond#1#2{\opstyle P \left[\left. #1 \right\vert #2 \right]}

\global\long\def\excond#1#2{\opstyle E \left[\left. #1 \right\vert #2 \right]}

\global\long\def\excondls#1#2#3{\opstyle E _{#3}\left[\left. #1 \right\vert #2 \right]}

\global\long\def\exls#1#2{\opstyle{\opstyle E}_{#1}\left[ #2 \right]}

\global\long\def\prls#1#2{\opstyle P_{#1}\left[ #2 \right]}

\global\long\def\br#1{\left( #1 \right)}

\global\long\def\brq#1{\left[ #1 \right]}

\global\long\def\brw#1{\left\{  #1\right\}  }

\global\long\def\size#1{\left|#1\right|}

\global\long\def\setst#1#2{\left\{  #1\left|#2\right.\right\}  }

\global\long\def\setstcol#1#2{\left\{  #1:#2\right\}  }

\global\long\def\set#1{\left\{  #1\right\}  }

\global\long\def\adj#1{\delta\br{#1}}

\global\long\def\st#1{[#1] }

\global\long\def\rin#1{r^{in}\br{#1}}

\global\long\def\rout#1{r^{out}\br{#1}}

\global\long\def\rini#1#2{r_{#1}^{in}\br{#2}}

\global\long\def\routi#1#2{r_{#1}^{out}\br{#2}}

\global\long\def\Iin{{\cal I}^{in}}

\global\long\def\Iout{{\cal I}^{out}}

\global\long\def\kin{k^{in}}

\global\long\def\kout{k^{out}}

\global\long\def\Min#1{{\cal M}_{#1}^{in}}

\global\long\def\Mout#1{{\cal M}_{#1}^{out}}

\global\long\def\M{{\cal M}}

\global\long\def\deltaM{\delta^{\M}}

\global\long\def\betin#1{\beta_{#1}^{in}}

\global\long\def\betout#1{\beta_{#1}^{out}}

\global\long\def\Bin#1{B_{#1}^{in}}

\global\long\def\B#1{B_{#1}}

\global\long\def\Bout#1{B_{#1}^{out}}

\global\long\def\indi#1{\chi\brq{#1}}

\global\long\def\chr#1{\mathbf{1}_{#1}}

\global\long\def\pxE{\br{p_{e}x_{e}}_{e\in E}}

\global\long\def\pxEt#1{\br{p_{e}x_{e}}_{e\in E^{#1}}}

\global\long\def\xE{\br{x_{e}}_{e\in E}}

\global\long\def\xEt#1{\br{x_{e}}_{e\in E^{#1}}}

\global\long\def\e{\bar{e}}

\global\long\def\evalat#1#2{ #1 \Big|_{#2}}

\global\long\def\Df#1#2{\frac{\partial#1}{\partial#2}}

\global\long\def\event#1{\mathcal{E}_{#1}^{out}}
 \global\long\def\eventsucc#1{\mathcal{E}_{#1}^{in}}

\global\long\def\X#1{\hat{X}_{#1}}

\global\long\def\E{\hat{E}}

\maketitle

\begin{abstract}
In a \emph{stochastic probing }problem we are given a universe $E$,
where each element $e\in E$ is \emph{active} independently with probability
$p_{e}\in\left[ {0,1} \right]$, and only a \emph{probe} of $e$ can tell us
whether it is active or not. On this universe we execute a process
that one by one probes elements --- if a probed element is active,
then we have to include it in the solution, which we gradually construct.
Throughout the process we need to obey \emph{inner }constraints on
the set of elements taken into the solution, and \emph{outer} constraints
on the set of all probed elements. This abstract model was presented
by Gupta~and~Nagarajan~\cite{DBLP:conf/ipco/GuptaN13}, and provides
a unified view of a number of problems. Thus far all the results in
this general framework pertain only to the case in which we are maximizing
a linear objective function of the successfully probed elements. In
this paper we generalize the stochastic probing problem by considering
a monotone submodular objective function. We give a $(1-1/e)/(k^{in}+k^{out}+1)$-approximation
algorithm for the case in which we are given $k^{in}\ge0$ matroids
as inner constraints and $k^{out}\ge1$ matroids as outer constraints.
There are two main ingredients behind this result.
First is a previously unpublished stronger bound on the continuous greedy algorithm due to Vondrak~\cite{Vondrak:email}.
Second is a rounding procedure that also allows us to obtain an improved $1/(k^{in}+k^{out})$-approximation for linear objective functions. 
\end{abstract}

\section{Introduction}

\label{sec:introduction}

Uncertainty in input data is a common feature of most practical problems,
and research in finding good solutions (both experimental and theoretical)
for such problems has a long history dating back to 1950~\cite{Beale:convex,Dantzig:uncertainty}.
We consider adaptive stochastic optimization problems in the framework
of Dean~et~al.\ \cite{DBLP:journals/mor/DeanGV08}. Here the solution
is in fact a process, and the optimal one might even require larger
than polynomial space to describe. Since the work of Dean~et~al.\ a
number of such problems were introduced~\cite{DBLP:conf/icalp/ChenIKMR09,DBLP:conf/latin/GoemansV06,DBLP:conf/stoc/GuhaM07,DBLP:conf/soda/GuhaM07,DBLP:conf/wine/AsadpourNS08,DBLP:conf/focs/GuptaKMR11,DBLP:conf/soda/DeanGV05}.
Gupta~and~Nagarajan~\cite{DBLP:conf/ipco/GuptaN13} present an
abstract framework for a subclass of adaptive stochastic problems
giving a unified view for Stochastic Matching~\cite{DBLP:conf/icalp/ChenIKMR09}
and Sequential Posted Pricing~\cite{DBLP:conf/stoc/ChawlaHMS10}.

We describe the framework following~\cite{DBLP:conf/ipco/GuptaN13}.
We are given a universe $E$, where each element $e\in E$ is \emph{active
}with probability $p_{e}\in[0,1]$ independently. The only way to
find out if an element is active, is to \emph{probe }it. We call a
probe \emph{successful} if an element turns out to be active. On universe
$E$ we execute an algorithm that probes the elements one-by-one.
If an element is active, the algorithm must add it to the current
solution. In this way, the algorithm gradually constructs a solution
consisting of active elements.

Here, we consider the case in which we are given constraints on both
the elements probed and the elements included in the solution. Formally,
suppose that we are given two independence systems of downward-closed
sets: an \emph{outer} independence system $\br{E,\Iout}$ restricting
the set of elements probed by the algorithm, and an \emph{inner} independence
system $\br{E,\Iin}$, restricting the set of elements taken by the
algorithm. We denote by $Q^{t}$ the set of elements probed in the
first $t$ steps of the algorithm, and by $S^{t}$ the subset of active
elements from $Q^{t}$. Then, $S^{t}$ is the partial solution constructed
by the first $t$ steps of the algorithm. We require that, at each
time $t$, $Q^{t}\in\Iout$ and $S^{t}\in\Iin$. Thus, at each time
$t$, the element $e$ that we probe must satisfy both $Q^{t-1}\cup\{e\}\in\Iout$
and $S^{t-1}\cup\{e\}\in\Iin$. Gupta and Nagarajan \cite{DBLP:conf/ipco/GuptaN13}
considered many types of systems $\Iin$ and $\Iout$, but we focus
only on matroid intersections, i.e.\ on the special case in which
$\Iin$ is an intersection of $\kin$ matroids $\Min 1,\ldots,\Min{\kin}$,
and $\Iout$ is an intersection of $\kout$ matroids $\Mout 1,\ldots,\Mout{\kout}$.
We always assume that $\kout\geq1$ and $\kin\ge0$. We assume familiarity
with matroid algorithmics (see \cite{Schrijver:book}, for example)
and, above all, with principles of approximation algorithms
(see \cite{Vazirani}, for example).

Considering submodular objective functions is a common practice in
combinatorial optimization as it extends the range of applicability
of many methods. So far, the framework of stochastic probing has been
used to maximize the expected weight of the solution found by the
process. We were given weights $w_{e}\ge0$ for $e\in E$ and, if
$S$ denotes the solution at the end of a process, the goal was to
maximize $\exls S{\sum_{e\in S}w_{e}}$. We generalize the framework
as we consider a monotone submodular function $f:2^{E}\mapsto\mathbb{R}_{\geq0}$,
and objective of maximizing $\exls S{f\br S}$.

\subsection{Our results}

Our result is a new algorithm for stochastic probing problem based
on iterative randomized rounding of linear programs and the \emph{continuous
greedy process} introduced by Calinescu~et~al.~\cite{Calinescu2011}.

\begin{thm} \label{thm:submodular}An algorithm based on the continuous
greedy process and iterative randomized rounding is a $\frac{\br{1-e^{-1}}}{\kin+\kout+1}$-approximation
for stochastic probing problem with monotone submodular objective
function. \end{thm}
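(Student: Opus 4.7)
The plan is to cast the stochastic probing problem as a continuous submodular maximization problem over a suitably scaled polytope, solve the relaxation via continuous greedy, and then round using an adaptive iterative scheme that loses a factor of $\kin+\kout+1$.

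The first step is to formulate the relaxation. Let $F:[0,1]^{E}\to\mathbb{R}_{\geq0}$ be the multilinear extension of $f$. For any adaptive probing strategy, let $x_{e}$ denote the probability that $e$ is ever probed, so that $p_{e}x_{e}$ is the probability $e$ enters the final solution. Standard arguments from stochastic probing imply that $\br{x_{e}}_{e\in E}\in\bigcap_{j=1}^{\kout}P(\Mout j)$ and $\br{p_{e}x_{e}}_{e\in E}\in\bigcap_{i=1}^{\kin}P(\Min i)$. Using monotonicity and submodularity of $f$, combined with the correlation-gap / concavity behavior of $F$ along nonnegative directions, one shows that $\exls S{f(S)}\leq F\br{\br{p_{e}x_{e}}_{e\in E}}$, yielding a valid continuous upper bound on the optimal adaptive value.

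Next I would run the continuous greedy algorithm of Calinescu et al.~\cite{Calinescu2011} to obtain a fractional point $x^{\star}$ with $F\br{\br{p_{e}x_{e}^{\star}}_{e\in E}}\geq\br{1-1/e}\cdot\mathrm{OPT}$ and $x^{\star}$ feasible in the intersection of all the scaled matroid polytopes. The stronger Vondrak bound alluded to in the excerpt will be invoked here to trade the $\br{1-1/e}$ factor against the rounding loss more tightly; this is what turns the naive $\br{1-1/e}/\br{\kin+\kout}$ bound into the claimed $\br{1-1/e}/\br{\kin+\kout+1}$, by coupling the greedy step with the first unit of rounding loss.

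The third and most delicate step is the adaptive iterative rounding. At each iteration, given the current fractional solution, the algorithm samples an element $e$ with probability proportional to $x_{e}/\br{\kin+\kout+1}$, probes it, and — depending on success — updates both the partial solution $S^{t}$ and the LP by re-solving on the reduced ground set under the residual matroid constraints. The central lemma to prove is that, conditional on the history, each still-available element $e$ is successfully probed in a given round with probability at least $p_{e}x_{e}^{\star}/\br{\kin+\kout+1}$. Here one appeals to matroid exchange properties and a union-bound-type argument over the $\kin+\kout$ independence systems that could block $e$.

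The hardest part will be turning this per-round marginal guarantee into a global bound on $\exls S{f(S)}$ in the submodular setting: unlike the linear case of Gupta--Nagarajan, preserving a full $1/\br{\kin+\kout+1}$ fraction of $F\br{\br{p_{e}x_{e}^{\star}}}$ requires controlling the joint distribution of probed elements so that the multilinear extension evolves in the right direction. I expect this to be handled by an inductive argument that couples the submodular gains of the rounding process with a ``$(1-1/e)$-type'' continuous greedy analysis on the remaining fractional mass, combining Vondrak's stronger bound with monotonicity of $f$ to close the gap to $\br{1-1/e}/\br{\kin+\kout+1}$.
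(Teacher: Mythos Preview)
There are two genuine gaps in your proposal.

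First, the relaxation bound is wrong. You claim that for an adaptive strategy with probing marginals $x_{e}$ one has $\exls S{f(S)}\le F\br{(p_{e}x_{e})_{e\in E}}$. This is false: $F(p\cdot x)$ is the value under the \emph{product} distribution with marginals $p_{e}x_{e}$, and the actual distribution of $S$ under $OPT$ can be correlated so that $\exls S{f(S)}>F(p\cdot x)$. The correct upper bound is $\exls S{f(S)}\le f^{+}(p\cdot x)$, where $f^{+}$ maximizes over all distributions with those marginals. This is exactly why the paper needs Vondr\'ak's stronger continuous-greedy guarantee (Lemma~\ref{lem:continuous-greedy}): it shows the fractional output $x^{0}$ satisfies $F(p\cdot x^{0})\ge(1-1/e)\,f^{+}(p\cdot x^{*})\ge(1-1/e)\,\ex{OPT}$, bypassing the need to compare $F$ and $OPT$ directly. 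Your ``correlation-gap'' remark points in the wrong direction: the correlation-gap inequality $F\ge(1-1/e)f^{+}$ would cost an \emph{extra} $(1-1/e)$ factor, not save one.

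Second, your account of where the ``$+1$'' comes from is backwards, and the rounding analysis you sketch does not deliver it. The $+1$ is \emph{not} produced by Vondr\'ak's bound ``coupling with the first unit of rounding loss''; going from denominator $\kin+\kout$ to $\kin+\kout+1$ is a loss, not a gain. In the paper, the rounding is analyzed via the potential $z^{t}=F(\chr{S^{t}}+p\cdot x^{t})-F(\chr{S^{t}})$ and a martingale/optional-stopping argument. When the algorithm probes an element, $z^{t}$ drops for two reasons: the matroid-update decreases in $x$ (bounded by $(\kin+\kout)$ times the gain, exactly as in the linear case), \emph{and} the change $F(\chr{S^{t+1}})-F(\chr{S^{t}})$ coming from the $-F(\chr{S^{t}})$ term in $z^{t}$, which contributes one more copy of the gain. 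That extra copy is the $+1$. Your proposed analysis --- a per-round marginal guarantee ``each still-available $e$ is successfully probed with probability at least $p_{e}x_{e}^{\star}/(\kin+\kout+1)$'' --- is a contention-resolution statement that, even if true, does not by itself control $\ex{f(S)}$ for submodular $f$; you would still need the potential/linearization machinery (Lemmas~\ref{lem:multilinear} and~\ref{lem:partialF}) to convert coordinate-wise losses in $x$ into losses in $F$.
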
 Additionally, we improve the bound of $\frac{1}{4\br{\kin+\kout}}$
given by Gupta and Nagarajan~\cite{DBLP:conf/ipco/GuptaN13} in the
case of a linear objective. \begin{thm} \label{thm:linear}The iterative
randomized rounding algorithm is a $\frac{1}{\kin+\kout}$-approximation
for the stochastic probing problem with a linear objective function.
\end{thm}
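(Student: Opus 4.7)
The strategy is LP rounding. First, I would write the natural LP relaxation, introducing a variable $x_e \in [0,1]$ for each element interpreted as the probability that $e$ is probed by the optimum adaptive strategy. The objective is $\sum_{e \in E} w_e p_e x_e$, and the constraints require that $(x_e)_{e \in E}$ lie in the matroid polytope of each outer matroid $\Mout i$ while $(p_e x_e)_{e \in E}$ lie in the matroid polytope of each inner matroid $\Min j$. That this LP upper bounds OPT follows from a direct averaging argument: for any outer matroid and any set $T \subseteq E$, every realization of any feasible adaptive strategy probes at most $r_{\Mout i}(T)$ elements of $T$, so $\sum_{e\in T} x_e \le r_{\Mout i}(T)$ in expectation; the analogous bound for each inner matroid uses that the active subset of the probed set must always be independent in $\Min j$, giving $\sum_{e\in T} p_e x_e \le r_{\Min j}(T)$.

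Next, I would describe and analyze the iterative randomized rounding. Given an optimal fractional solution $x^*$, the algorithm processes elements one at a time and fixes each probing decision to $0$ or $1$, with the target that every element $e$ is ultimately probed with marginal probability at least $x_e^*/(\kin+\kout)$. The key structural observation driving this factor is that the joint family of constraints $(x_e)\in P(\Mout i)$ and $(p_ex_e)\in P(\Min j)$ can be viewed as a single intersection of $\kin+\kout$ matroid polytope constraints on the vector $(x_e)$ (after absorbing the $p_e$-factors into the inner matroid rank functions). Scaling $x^*$ down by $1/(\kin+\kout)$ therefore produces a fractional point that lies in each of these $\kin+\kout$ matroid polytopes individually, and the iterative procedure proceeds by repeatedly picking an element, committing to probe or discard it with the appropriate bias, and updating the residual fractional solution while maintaining simultaneous feasibility in all polytopes.

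By linearity of expectation, once the marginal bound is established the expected value of the solution is at least $\frac{1}{\kin + \kout}\sum_e w_e p_e x_e^* \geq \frac{\mathrm{OPT}}{\kin + \kout}$. The main obstacle will be the analysis of the rounding itself: the probing decisions are adaptive and feasibility in the inner matroids depends on the random outcomes of earlier probes, so a standard matroid-intersection rounding argument cannot be applied as a black box. Removing the factor of $4$ from the Gupta--Nagarajan~\cite{DBLP:conf/ipco/GuptaN13} analysis requires showing that the single scaling by $1/(\kin+\kout)$ is sufficient slack to resolve all matroid conflicts simultaneously; I expect this to require a carefully coupled argument in which the conditional probability that element $e$ is blocked at the step it is considered is dissected across the $\kin+\kout$ constraints — perhaps via an exchange property in the spirit of swap rounding — together with the observation that each blocking event can be charged to tightness in exactly one of the underlying matroid polytopes.
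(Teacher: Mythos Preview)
Your LP relaxation and the argument that it upper bounds $\ex{OPT}$ are correct and match the paper. The gap is entirely in the rounding, and here your proposal is both incomplete and headed in a different direction than the paper.

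First, a concrete error: the claim that one can ``absorb the $p_e$-factors into the inner matroid rank functions'' and thereby view the feasible region as an intersection of $\kin+\kout$ matroid polytopes on the vector $(x_e)$ is not right. The constraint $(p_e x_e)_e \in {\cal P}(\Min j)$, written in terms of $x$, reads $\sum_{e\in A} p_e x_e \le r_{\Min j}(A)$; this is a polymatroid-type constraint on $x$, not a matroid polytope, so standard matroid-intersection rounding machinery on $x$ does not apply directly. More importantly, you observe that $x^*$ already lies in each of the $\kin+\kout$ polytopes individually, so the pre-scaling by $1/(\kin+\kout)$ buys nothing structurally on its own; what you would actually need is a contention-resolution-style statement that after adaptive probing with those marginals, each element survives all $\kin+\kout$ blocking events. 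That is exactly the shape of the Gupta--Nagarajan analysis, and it is where their extra factor of $4$ comes from. Your proposal does not supply the idea that removes this factor; the closing paragraph is speculation rather than an argument.

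The paper proceeds quite differently. There is no up-front scaling and no marginal-probability accounting. Instead, the algorithm repeatedly samples an element $\e$ with probability $x_{\e}/\Sigma$ (where $\Sigma=\sum_e x_e$), probes it, sets $x_{\e}\gets 0$, contracts $\e$ in every outer matroid and (if the probe succeeds) in every inner matroid, and then \emph{repairs} the fractional solution so that it remains feasible in each contracted polytope. The repair is the technical heart: one writes $x$ (resp.\ $p\cdot x$) as a convex combination of independent sets, picks a random ``guide'' set containing $\e$, and uses the basis-exchange mapping of Corollary~\ref{cor:mapping} to drop at most one element from each other set. The analysis is a potential/martingale argument: with $z^t=\sum_e w_e p_e x_e^t$, Lemmas~\ref{lem:outerdrop} and~\ref{lem:innerdrop} bound the expected drop in each coordinate from a single matroid repair by $\frac{1}{\Sigma}(1-x_i)p_i x_i$ (outer) and $\frac{1}{\Sigma}(1-p_i x_i)p_i x_i$ (inner), which sum over all $\kin+\kout$ repairs plus the probed coordinate itself to at most $\frac{\kin+\kout}{\Sigma}p_i x_i$. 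Comparing this to the expected gain $\frac{1}{\Sigma}\sum_e w_e p_e x_e$ and invoking optional stopping (Lemma~\ref{lem:martingale}) gives the $\frac{1}{\kin+\kout}$ ratio directly. The exchange-mapping repair and the per-coordinate loss bounds are the ideas your plan is missing.
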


\subsection{Applications}

\paragraph*{On-line dating and kidney exchange \cite{DBLP:conf/icalp/ChenIKMR09}}

Consider an online dating service. For each pair of users, machine
learning algorithms estimate the probability that they will form a
happy couple. However, only after a pair meets do we know for sure
if they were successfully matched (and together leave the dating service).
Users have individual patience numbers that bound how many unsuccessful
dates they are willing to go on until they will leave the dating service
forever. The objective of the service is to maximize the number of
successfully matched couples.

To model this as a stochastic probing problem, users are represented
as vertices $V$ of a graph $G=\br{V,E}$, where edges represent matched
couples. Set $E$ of edges is our universe on which we make probes,
with $p_{e}$ being the probability that a couple $e=\br{u_{1},u_{2}}$
forms a happy couple after a date. The inner constraints are matching
constraints --- a user can be in at most one couple ---, and outer
constraints are $b$--matching --- we can probe at most $t\br u$
edges adjacent to user $u$, where $t\br u$ denotes the patience
of $u$. Both inner and outer constraints are intersections of two
matroids for bipartite graphs. In similar way we can model kidney
exchanges.

In weighted bipartite case Theorem~\ref{thm:linear} gives a 1/4-approximation.
Even though $b$--matchings in general graphs are not intersections
of two matroids, we are able to exploit the matching structure to
give the same factor-1/4 approximation. Since the technique is very
similar to the case of intersection of two matroids, we omit the proof.
This matches the current-best bound for general graphs of Bansal~et~al.~\cite{Bansal:woes}, who also give a 1/3-approximation in the bipartite case.

\iffalse
This factor matches the current best bound of Bansal~et~al.~\cite{Bansal:woes}.

For weighted bipartite case without patience constraints our framework gives a 1/3-approximation.
Similarly as Bansal~et~al.~\cite{Bansal:woes}, we can use randomized pipage rounding technique~\cite{DBLP:journals/jacm/GandhiKPS06}
to get rid of the patience constraints, and get factor-1/3 approximation as well.
Details on how our framework can be combined with randomized pipage rounding
can be found in technical report~\cite{DBLP:journals/corr/AdamczykSW13}.
\fi

\iffalse
\begin{theorem} An algorithm based on iterative randomized rounding algorithm
is a 1/4-approximation for the weighted stochastic matching problem
in general graphs. \end{theorem} Even though our framework directly does
not yield a $1/3$-approximation in the bipartite case, we can combine
it with the randomized pipage rounding technique of Ghandi et al.~\cite{DBLP:journals/jacm/GandhiKPS06}
to obtain the following Theorem.

\begin{restatable}{theorem}{stochmatchbip} An algorithm based on
randomized pipage rounding and iterative randomized rounding algorithm
is a 1/3-approximation for the weighted stochastic matching problem
in bipartite graphs. \end{restatable}

Above factors of $1/4$ and $1/3$ match the current-best bounds of
Bansal~et~al.~\cite{Bansal:woes}.
\fi

\paragraph*{Bayesian mechanism design \cite{DBLP:conf/ipco/GuptaN13}}

Consider the following mechanism design problem. There are $n$ agents
and a single seller providing a certain service. Agent's $i$ value
for receiving service is $v_{i}$, drawn independently from a distribution
$D_{i}$ over set $\set{0,1,\ldots,B}$. The valuation $v_{i}$ is
private, but the distribution $D_{i}$ is known. The seller can provide
service only for a subset of agents that belongs to system ${\cal I}\in2^{\brq n}$,
which specifies feasibility constraints. A mechanism accepts bids
of agents, decides on subset of agents to serve, and sets individual
prices for the service. A mechanism is called truthful if agents bid
their true valuations. Myerson's theory of virtual valuations yields
\emph{truthful }mechanisms that maximize the expected revenue of a
seller, although they sometimes might be impractical. On the other
hand, practical mechanisms are often non-truthful. The Sequential
Posted Pricing Mechanism (SPM) introduced by Chawla~et~al.\ \cite{DBLP:conf/stoc/ChawlaHMS10}
gives a nice trade-off --- it is truthful, simple to implement, and
gives near-optimal revenue. An SPM offers each agent a ``take-it-or-leave-it''
price for the service. Since after a refusal a service won't be provided,
it is easy to see that an SPM is a truthful mechanism.

To see an SPM as a stochastic probing problem, we consider a universe
$E=\brq n\times\set{0,1,\ldots,B}$, where element $\br{i,c}$ represents
an offer of price $c$ to agent $i$. The probability that $i$ accepts
the offer is $\pr{v_{i}\geq c}$, and seller earns $c$ then. Obviously,
we can make only one offer to an agent, so outer constraints are given
by a partition matroid; making at most one probe per agent also overcomes
the problem that probes of $\br{i,1},...,\br{i,B}$ are not independent.
The inner constraints on universe $\brq n\times\set{0,1,\ldots,B}$
are simply induced by constraints ${\cal I}$ on $\brq n$.

Gupta and Nagarajan~\cite{DBLP:conf/ipco/GuptaN13} give an LP relaxation
for any single-seller Bayesian mechanism design problem. Provided
that we can optimize over ${\cal P}\br{{\cal I}}$, the LP can be
used to construct an efficient SPM. Moreover, the approximation guarantee
of the constructed SPM is with respect to the optimal mechanism, which
need not be an SPM.

In the case constraints ${\cal I}$ are an intersection of $k$ matroids
the resulting SPM is a $\frac{1}{4\br{k+1}}$-approximation~\cite{DBLP:conf/ipco/GuptaN13}.
Here, we give an improved approximation algorithm with a factor-$\frac{1}{k+1}$
guarantee. In particular, when $k=1$ we match~\cite{DBLP:conf/stoc/ChawlaHMS10,DBLP:conf/stoc/KleinbergW12}
with $1/2$-approximation.

\subsection{Related work}

\label{sec:related-work}

The stochastic matching problem with applications to online dating
and kidney exchange was introduced by Chen~et~al.\ \cite{DBLP:conf/icalp/ChenIKMR09},
where authors proved a 1/4-approximation of a greedy strategy for
unweighted case. The authors also show that the simple greedy approach
gives no constant approximation in the weighted case.
\iffalse that the set of \emph{probes}
must form a valid matching, which is special case of the general probing
setting with outer matroid constraints.\fi
Their bound was later improved
to 1/2 by Adamczyk~\cite{Adamczyk:greedy}. As noted in our discussion
of applications, Bansal~et~al.\ \cite{Bansal:woes} gave 1/3 and 1/4-approximations
for weighted stochastic matching in bipartite and general graphs, respectively.

Sequential Posted Pricing mechanisms were investigated first by Chawla~et~al.~\cite{DBLP:conf/stoc/ChawlaHMS10},
followed by Yan~\cite{DBLP:conf/soda/Yan11}, and Kleinberg~and~Weinberg~\cite{DBLP:conf/stoc/KleinbergW12}.
Gupta~and~Nagarajan~\cite{DBLP:conf/ipco/GuptaN13} were first
to propose looking at SPM from the point of view of stochastic adaptive
problems.

Asadpour~et~al.\ \cite{DBLP:conf/wine/AsadpourNS08} were first
to consider a stochastic adaptive problem with submodular objective
function. In our terms, they considered only a single outer matroid
constraint.

Work of Calinescu~et~al.\ \cite{Calinescu2011} \iffalse and Vondrák~\cite{Vondrak:PhD}\fi
provides the tools for submodular functions we use in this paper.
The method of~\cite{DBLP:conf/soda/Yan11} was based on ``correlation
gap''~\cite{DBLP:conf/soda/AgrawalDSY10}, something we address
implicitly in Subsection~\ref{sec:submodular-functions}. %We use there facts
%from~\cite{Vondrak:PhD}, some of which are also used in the analysis
%of a non-adaptive strategy by Asadpour~et~al.\ \cite{DBLP:conf/wine/AsadpourNS08}.

\section{Preliminaries}

\label{sec:preliminaries}

For set $S\subseteq E$ and element $e\in E$ we use $S+e$ to denote
$S\cup\{e\}$, and $S-e$ to denote $S\setminus\set e$. For set $S\subseteq E$
we shall denote by $\chr S$ a characteristic vector of set $S$,
and for a single element $e$ we shall write $\chr e$ instead of
$\chr{\set e}$. For random event ${\cal A}$ we shall denote by $\indi{{\cal A}}$
a 0-1 random variable that indicates whether ${\cal A}$ occurred.
The optimal strategy will be denoted by $OPT$, and we shall denote
the expected objective value of its outcome as $\ex{OPT}$.

\subsection{Matroids and polytopes}

\label{sec:matroids-polytopes}

Let ${\cal M}=\br{E,{\cal I}}$ be a matroid, where $E$ is the universe
of elements and ${\cal I}\subseteq2^{E}$ is a family of independent
sets. For element $e\in E$, we shall denote the matroid ${\cal M}$
with $e$ contracted by ${\cal M}/e$, i.e.\ ${\cal M}/e=\br{E-e,\setst{S\subseteq E-e}{S+e\in{\cal I}}}$.

The following lemma is a slightly modified%
\footnote{The difference is that we do not assume that $A,B$ are bases, but
independent sets of the same size.%
} basis exchange lemma, which can be found in~\cite{Schrijver:book}.
\begin{lemma} \label{lem:exchange}Let $A,B\in{\cal I}$ and $\size A=\size B$.
There exists a bijection $\phi:A\mapsto B$ such that: 1) $\phi\br e=e$
for every $e\in A\cap B$, 2) $B-\phi\br e+e\in{\cal I}$. \end{lemma}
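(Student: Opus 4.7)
The natural plan is to reduce to the classical \emph{bijective} form of the basis exchange axiom via a truncation argument. Setting $k = |A| = |B|$, I would first consider the truncated matroid $\mathcal{M}'$ whose independent sets are $\{S \in \mathcal{I} : |S| \leq k\}$. It is well known that this is again a matroid, and since $A$ and $B$ are independent in $\mathcal{M}$ with $|A| = |B| = k$, both $A$ and $B$ are in fact bases of $\mathcal{M}'$.

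I would then invoke the classical bijective basis exchange theorem (as given, e.g., in Schrijver): for any two bases $A, B$ of a matroid there exists a bijection $\phi : A \to B$ with $\phi(e) = e$ on $A \cap B$, such that $B - \phi(e) + e$ is again a basis. Applying this to $\mathcal{M}'$ gives a bijection satisfying condition~(1) automatically, and condition~(2) follows because every basis of $\mathcal{M}'$ has size $k$ and lies in $\mathcal{I}$ by construction; hence $B - \phi(e) + e \in \mathcal{I}$ as required.

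The only nontrivial ingredient is the bijective basis exchange theorem itself. If one prefers a self-contained derivation, the standard route is to build the bipartite exchange graph on $A \setminus B$ and $B \setminus A$ whose edges are pairs $(e, f)$ with $B - f + e$ a basis of $\mathcal{M}'$, and to produce a perfect matching via Hall's theorem; the Hall condition is checked using the strong exchange axiom together with submodularity of the rank function of $\mathcal{M}'$. I expect this matching argument to be the only step requiring care — the truncation reduction and the extension of $\phi$ by the identity on $A \cap B$ are both routine.
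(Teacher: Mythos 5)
The paper does not actually supply a proof of this lemma; it simply cites Schrijver and remarks in a footnote that the only change from the standard basis-exchange theorem is that $A,B$ are assumed to be independent sets of equal size rather than bases. Your truncation argument is the natural way to make that ``slight modification'' precise, and it is correct: truncating $\mathcal{M}$ at rank $k=|A|=|B|$ produces a matroid in which $A$ and $B$ are bases, the classical bijective exchange theorem (Brualdi; see Schrijver) gives a bijection that is the identity on $A\cap B$ and satisfies the exchange condition in the truncation, and independence in the truncation is independence in $\mathcal{M}$. So your proposal fills in exactly the gap the paper leaves implicit, and does so along the route the paper's footnote is gesturing at. One small remark: you do not need to reprove Brualdi's theorem via Hall's theorem on the exchange graph --- citing it, as the paper does, is sufficient --- but your sketch of that argument is also correct.
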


We shall use the following corollary, where we consider independent
sets of possibly different sizes.

\begin{restatable}{corollary}{matroidcor}
\label{cor:mapping}Let $A,B\in{\cal I}$. We can find assignment $\phi_{A,B}:A\mapsto B\cup\{\bot\}$ such that: 
\begin{enumerate}
\item $\phi_{A,B}\br e=e$ for every $e\in A\cap B$, 
\item for each $f\in B$ there exists at most one $e\in A$ for which $\phi_{A,B}\br e=f$, 
\item for $e\in A\setminus B$, if $\phi_{A,B}\br e=\bot$ then $B+e\in{\cal I}$,
otherwise $B-\phi_{A,B}\br e+e\in{\cal I}$. 
\end{enumerate}
\end{restatable}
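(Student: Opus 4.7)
The plan is to reduce the statement to Lemma~\ref{lem:exchange} by bringing $A$ and $B$ to the same cardinality through matroid augmentation, applying the lemma to the enlarged sets, and then translating the resulting bijection back to an assignment $\phi_{A,B}$ that uses the symbol $\bot$ whenever it lands in the augmented part. I would split the argument into two cases according to whether $|A|\leq|B|$ or $|A|>|B|$.

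In the case $|A|\leq|B|$, the matroid augmentation property allows me to greedily extend $A$ using elements of $B\setminus A$ to obtain an independent set $A'\supseteq A$ with $|A'|=|B|$. Lemma~\ref{lem:exchange} then yields a bijection $\phi':A'\to B$ satisfying $\phi'(e)=e$ for $e\in A'\cap B$ and $B-\phi'(e)+e\in\mathcal I$ for every $e\in A'$. I simply set $\phi_{A,B}:=\phi'\big|_{A}$; injectivity gives property~2, the identity condition on $A\cap B$ gives property~1, and property~3 holds in the non-$\bot$ form directly from the lemma. No element needs to be sent to $\bot$.

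In the case $|A|>|B|$, I symmetrically extend $B$ to an independent set $B'\supseteq B$ with $|B'|=|A|$ and apply Lemma~\ref{lem:exchange} to obtain a bijection $\phi':A\to B'$ with $\phi'(e)=e$ on $A\cap B'\supseteq A\cap B$ and $B'-\phi'(e)+e\in\mathcal I$ for every $e\in A$. I then define
\[
\phi_{A,B}(e)=\begin{cases}\phi'(e),&\phi'(e)\in B,\\ \bot,&\phi'(e)\in B'\setminus B.\end{cases}
\]
Property~1 is immediate since $\phi'(e)=e\in B$ for $e\in A\cap B$, and property~2 follows from the bijectivity of $\phi'$. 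For property~3 with $e\in A\setminus B$: if $\phi_{A,B}(e)=\phi'(e)\in B$, then $B-\phi'(e)+e\subseteq B'-\phi'(e)+e\in\mathcal I$, so independence is inherited; if $\phi_{A,B}(e)=\bot$, then $\phi'(e)\in B'\setminus B$, so $B\subseteq B'-\phi'(e)$, and hence $B+e\subseteq B'-\phi'(e)+e\in\mathcal I$.

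The verifications in either case are light, so I do not expect a genuine obstacle; the only place that requires a second's thought is the $\bot$-branch of property~3, where one must notice that $B$ is contained in $B'-\phi'(e)$ precisely because $\phi'(e)$ was chosen from $B'\setminus B$, so downward-closedness of $\mathcal I$ promotes the independence of $B'-\phi'(e)+e$ to that of $B+e$.
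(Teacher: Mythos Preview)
Your proof is correct and follows essentially the same approach as the paper's: augment the smaller set using the matroid exchange axiom, apply Lemma~\ref{lem:exchange} to the equal-size pair, and then send to $\bot$ anything that lands in the augmented part. The paper only spells out the case $|A|\geq|B|$ and declares the other ``similar,'' whereas you treat both explicitly and carry out the verifications for property~3 in more detail; otherwise the arguments are the same.
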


We consider optimization over \emph{matroid polytopes } which have
the general form ${\cal P}\br{{\cal M}}=\setst{x\in\mathbb{R}_{\geq0}^{E}}{\forall_{A\in{\cal I}}\sum_{e\in A}x_{e}\leq r_{{\cal M}}\br A}$,
where $r_{{\cal M}}$ is the rank function of ${\cal M}$. We know~\cite{Schrijver:book}
that the matroid polytope ${\cal P}\br{{\cal M}}$ is equivalent to
the convex hull of $\setst{\chr A}{A\in{\cal I}}$, i.e.\ characteristic
vectors of all independent sets of ${\cal M}$. Thus, we can represent
any $x\in{\cal P}\br{{\cal M}}$ as $x=\sum_{i=1}^{m}\beta_{i}\cdot\chr{B_{i}}$,
where $B_{1},\ldots,B_{m}\in{\cal I}$ and $\beta_{1},\ldots,\beta_{m}$
are non-negative weights such that $\sum_{i=1}^{m}\beta_{i}=1$ .
We shall call sets $B_{1},\ldots,B_{m}$ a \emph{support }of $x$
in ${\cal P}\br{{\cal M}}$.

\subsection{Submodular functions}

\subsubsection{Multilinear extension}

A set function $f:2^{E}\mapsto\mathbb{R}_{\ge0}$ is \emph{submodular},
if for any two subsets $S,T\subseteq E$ we have $f\br{S\cup T}+f\br{S\cap T}\leq f\br S+f\br T$.
We call function $f$ \emph{monotone}, if for any two subsets $S\subseteq T\subseteq E:f\br S\leq f\br T$.
For a set $S\subseteq E$, we let $f_{S}(A)=f(A\cup S)-f(S)$ denote
the marginal increase in $f$ when the set $A$ is added to $S$.
Note that if $f$ is monotone submodular, then so is $f_{S}$ for
all $S\subseteq E$. Moreover, we have $f_{S}(\emptyset)=0$ for all
$S\subseteq E$, so $f_{S}$ is normalized. Without loss of generality,
we assume also that $f\br{\emptyset}=0$.

We consider the \emph{multilinear extension} $F:[0,1]^{E}\mapsto\mathbb{R}_{\ge0}$
of $f$, whose value at a point $y\in\brq{0,1}^{E}$ is given by 
\[
F(y)=\sum_{A\subseteq E}f(A)\prod_{e\in A}y_{e}\prod_{e\not\in A}(1-y_{e}).
\]
 Note that $F\br{\chr A}=f\br A$ for any set $A\subseteq E$, so
$F$ is an extension of $f$ from discrete domain $2^{E}$ into a
real domain $\brq{0,1}^{E}$. The value $F(y)$ can be interpreted
as the expected value of $f$ on a random subset $A\subseteq E$ that
is constructed by taking each element $e\in E$ with probability $y_{e}$.
Following this interpretation, Calinescu~et~al.~\cite{Calinescu2011}
show that $F(y)$ can be estimated to any desired accuracy in polynomial
time, using a sampling procedure.

Additionally, they show that $F$ has the following properties, which
we shall make use of in our analysis:

\begin{lemma} \label{lem:multilinear}The multilinear extension $F$
is linear along the coordinates, i.e.\ for any point $x\in\brq{0,1}^{E}$,
any element $e\in E$, and any $\xi\in\brq{-1,1}$ such that $x+\xi\cdot\chr e\in\brq{0,1}^{E}$,
it holds that $F\br{x+\xi\cdot\chr e}-F\br x=\xi\cdot\Df F{y_{e}}\br x$,
where $\Df F{y_{e}}\br x$ is the partial derivative of $F$ in direction
$y_{e}$ at point $x$. \end{lemma}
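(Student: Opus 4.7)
The plan is to prove this by exploiting the manifestly multilinear structure of $F$: the definition writes $F$ as a sum of terms, each of which is a product over coordinates of $y$ in which every coordinate appears to degree at most one. In particular, for any fixed coordinate $e$, each term is affine in $y_e$, so $F$ itself is affine in $y_e$ when all the other coordinates are frozen.

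Concretely, I would split the defining sum according to whether $e\in A$ or not:
\begin{align*}
F(y) \;=\; y_e \sum_{A\ni e} f(A)\!\!\prod_{e'\in A\setminus\{e\}}\!\!y_{e'}\!\!\prod_{e'\notin A}\!(1-y_{e'}) \;+\; (1-y_e)\sum_{A\not\ni e} f(A)\!\prod_{e'\in A}y_{e'}\!\!\prod_{e'\notin A\cup\{e\}}\!(1-y_{e'}).
\end{align*}
Each of the two inner sums depends only on the coordinates $y_{e'}$ for $e'\neq e$. Thus $F(y)=y_e\cdot\alpha(y_{-e})+\beta(y_{-e})$ for appropriate functions $\alpha,\beta$ of the remaining coordinates.

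From this affine representation, the partial derivative at $x$ is simply $\frac{\partial F}{\partial y_e}(x)=\alpha(x_{-e})$, which depends on $x$ only through the coordinates other than $e$. Moreover, since $x+\xi\cdot\chr e$ agrees with $x$ on all coordinates except $e$, where its value is $x_e+\xi$, we obtain
\[
F(x+\xi\cdot\chr e)-F(x) \;=\; \bigl((x_e+\xi)-x_e\bigr)\cdot\alpha(x_{-e}) \;=\; \xi\cdot\tfrac{\partial F}{\partial y_e}(x),
\]
which is exactly the claim. There is no real obstacle here: the lemma is a direct consequence of multilinearity, and the only thing to verify carefully is that splitting the sum along $e$ indeed yields an expression affine in $y_e$, which is immediate from inspection of the two products.
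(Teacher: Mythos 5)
Your proof is correct. Note, however, that the paper does not actually prove this lemma: Lemmas \ref{lem:multilinear} and \ref{lem:partialF} are stated as properties of $F$ established by Calinescu et al.\ and are simply cited, so there is no proof in the paper to compare against. Your argument is the standard one and is sound: splitting the defining sum by whether $e\in A$ isolates $y_e$ and $(1-y_e)$ as the only occurrences of the $e$-coordinate, so $F(y)=y_e\cdot a(y_{-e})+(1-y_e)\cdot b(y_{-e})$ is affine in $y_e$ with slope $a-b$, and the finite difference along $\chr e$ equals $\xi$ times that slope, which is $\tfrac{\partial F}{\partial y_e}(x)$. One tiny presentational point: your line ``$F(y)=y_e\cdot\alpha(y_{-e})+\beta(y_{-e})$'' silently regroups the two terms (with $\alpha=a-b$, $\beta=b$); it is immediate, but worth making explicit since that regrouping is what identifies the coefficient of $y_e$ with the partial derivative.
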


\begin{lemma} \label{lem:partialF}If $F:\brq{0,1}^{E}\mapsto\mathbb{R}$
is a multilinear extension of monotone submodular function $f:2^{E}\mapsto\mathbb{R}$,
then 1) function $F$ has second partial derivatives everywhere; 2)
for each $e\in E$, $\Df F{y_{e}}\geq0$ everywhere; 3) for any $e_{1},e_{2}\in E$
(possibly equal), $\frac{\partial^{2}F}{\partial y_{e_{1}}\partial y_{e_{2}}}\leq0$,
which means that $\Df F{y_{e_{2}}}$ is non-increasing with respect
to $y_{e_{1}}$. \end{lemma}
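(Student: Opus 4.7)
The plan is to lean on the probabilistic interpretation $F(x) = \mathbb{E}[f(R(x))]$, where $R(x) \subseteq E$ contains each element $e$ independently with probability $x_e$. Since the explicit formula defining $F$ is a polynomial that is multilinear in the variables $y_e$, part (1) is immediate: $F$ is $C^\infty$ and in particular all second partial derivatives exist at every point of $[0,1]^E$.

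For part (2), I would invoke Lemma \ref{lem:multilinear}. Applied with $\xi = 1$ at the point $x$ with its $e$-th coordinate temporarily set to $0$, it gives $\partial F/\partial y_e(x) = F(x \text{ with } y_e{=}1) - F(x \text{ with } y_e{=}0)$, which has the sampling expression $\mathbb{E}[f(R' + e) - f(R')]$ where $R'$ is a random subset of $E - e$ including each $e' \neq e$ independently with probability $x_{e'}$. By monotonicity of $f$ every sample is non-negative, so the expectation is non-negative.

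For part (3), note first that $\partial F/\partial y_e$ does not itself depend on $y_e$: Lemma \ref{lem:multilinear} expresses it as a difference of two evaluations of $F$ in which the $e$-th coordinate has been fixed. Hence the diagonal second partial $\partial^2 F/\partial y_e^2$ is identically zero. For $e_1 \neq e_2$, applying Lemma \ref{lem:multilinear} once in each of the two directions gives
\[
\frac{\partial^2 F}{\partial y_{e_1}\, \partial y_{e_2}}(x) \;=\; \mathbb{E}\bigl[f(R'' + e_1 + e_2) - f(R'' + e_1) - f(R'' + e_2) + f(R'')\bigr],
\]
where $R''$ is a random subset of $E - e_1 - e_2$ drawn according to the coordinates $x_{e'}$ for $e' \notin \{e_1, e_2\}$. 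Submodularity of $f$, applied to each outcome $R''$, says exactly that $f(R'' + e_1 + e_2) - f(R'' + e_1) \le f(R'' + e_2) - f(R'')$, so each sample is non-positive, and part (3) follows.

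There is no real obstacle here: the statements are essentially a direct translation of monotonicity and submodularity through the probabilistic identity $F(x) = \mathbb{E}[f(R(x))]$. The only point worth stating cleanly is that $\partial F/\partial y_e$ is independent of $y_e$ (so the diagonal case of (3) is automatic), which is why one can bound the mixed second partials by a single application of the submodular inequality pointwise inside the expectation.
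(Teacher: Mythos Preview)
Your argument is correct and is the standard proof of these facts. Note, however, that the paper does not actually prove this lemma: it is stated as a known property of the multilinear extension, attributed to Calinescu~et~al.~\cite{Calinescu2011}, so there is no ``paper's own proof'' to compare against. Your write-up is exactly the argument one finds in that reference: use the polynomial (multilinear) form of $F$ for smoothness, the probabilistic identity $\partial F/\partial y_e(x)=\mathbb{E}[f(R'+e)-f(R')]$ together with monotonicity for part~(2), multilinearity to kill the diagonal second derivative, and the probabilistic identity for the mixed second partial together with the submodular inequality for part~(3). Nothing is missing.
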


\subsubsection{Continuous greedy algorithm}
\label{sec:submodular-functions}

\label{sub:continousgreedy}

In~\cite{Calinescu2011} the authors utilized the multilinear extension
in order to maximize a submodular monotone function over a matroid
constraint. They showed that a \emph{continuous greedy algorithm}
finds a $\br{1-1/e}$-approximate maximum of the above extension $F$
over any downward closed polytope. In the special case of the matroid
polytope, they show how to employ the pipage rounding~\cite{DBLP:journals/jco/AgeevS04}
technique to the fractional solution to obtain an integral solution.

Another extension of $f$ studied in \cite{DBLP:conf/ipco/CalinescuCPV07}
is given by: 
\[
f^{+}(y)=\max\setst{\sum_{A\subseteq E}\alpha_{A}f(A)}{\sum_{A\subseteq E}\alpha_{A}\le1,\ \forall A\subseteq E:\alpha_{A}\ge0,\ \forall j\in E:\sum_{A:j\in A}\alpha_{A}\le y_{j}}
\]
 Intuitively, the solution $(\alpha_{A})_{A\subseteq E}$ above represents
the distribution over $2^{E}$ that maximizes the value $\ex{f(A)}$
subject to the constraint that its marginal values satisfy $\pr{i\in A}\le y_{i}$.
The value $f^{+}(y)$ is then the expected value of $\ex{f(A)}$ under
this distribution, while the value of $F(y)$ is the value of $\ex{f(A)}$
under the particular distribution that places each element $i$ in
$A$ independently. However, the following allows us to relate the
value of $F$ on the solution of the continuous greedy algorithm to
the optimal value of the relaxation $f^{+}$. \begin{restatable}{lemma}{vondrak}
\label{lem:continuous-greedy} Let $f$ be a submodular function with
multilinear extension $F$, and let $\mathcal{P}$ be any downward
closed polytope. Then, the solution $x\in\mathcal{P}$ produced by
the continuous greedy algorithm satisfies $F(x)\ge(1-1/e)\max_{y\in\mathcal{P}}f^{+}(y)$.
\end{restatable}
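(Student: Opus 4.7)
The plan is to follow the standard continuous greedy analysis of Calinescu et al.~\cite{Calinescu2011}, replacing the role of an optimal integral solution by a point $y^{*}\in\arg\max_{y\in\mathcal{P}}f^{+}(y)$. The crux is to reestablish the one per-step progress inequality that drives the analysis: at every iterate $x$ on the trajectory, the best feasible direction gives improvement rate at least $f^{+}(y^{*})-F(x)$.

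First I would prove the following pointwise inequality: for every $x\in[0,1]^{E}$ and every $y\in[0,1]^{E}$, one has $y\cdot\nabla F(x)\ge f^{+}(y)-F(x)$. Let $R$ be the random subset that includes each $i\in E$ independently with probability $x_{i}$, so $F(x)=\mathbb{E}[f(R)]$. Monotonicity and submodularity of $f$ yield, for every fixed $A\subseteq E$,
\begin{equation*}
f(A)-f(R)\ \le\ f(A\cup R)-f(R)\ \le\ \sum_{i\in A\setminus R}f_{R}(i)\ \le\ \sum_{i\in A}f_{R\setminus\{i\}}(i).
\end{equation*}
Taking expectation over $R$ and observing that $\mathbb{E}[f_{R\setminus\{i\}}(i)]=\partial F/\partial y_{i}(x)$, this becomes $f(A)-F(x)\le\chr A\cdot\nabla F(x)$. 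Multiplying by the optimal weights $\alpha_{A}$ from the definition of $f^{+}(y)$, summing over $A$, and interchanging the two sums using $\partial F/\partial y_{i}(x)\ge0$ (Lemma~\ref{lem:partialF}) together with the marginal constraint $\sum_{A\ni i}\alpha_{A}\le y_{i}$, yields the inequality.

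Next I would plug this into the standard continuous greedy analysis. The algorithm maintains $x(t)\in\mathcal{P}$ starting from $x(0)=0$ and at each time moves in the direction $v(t)\in\arg\max_{v\in\mathcal{P}}v\cdot\nabla F(x(t))$. Applying the pointwise inequality with $y=y^{*}$ gives $v(t)\cdot\nabla F(x(t))\ge f^{+}(y^{*})-F(x(t))$. The usual discretized/ODE argument from~\cite{Calinescu2011} (using Lemma~\ref{lem:multilinear} and Lemma~\ref{lem:partialF}, which together imply concavity of $F$ in positive directions, so that the realized gain of a small step in direction $v(t)$ is lower-bounded by its first-order approximation at the endpoint) then integrates this differential inequality to $F(x(1))\ge(1-e^{-1})f^{+}(y^{*})$.

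The main obstacle is the pointwise inequality above; once it is established the rest of the proof is an unchanged copy of the standard analysis, with $f(OPT)$ replaced by $f^{+}(y^{*})$ throughout. Morally, this inequality is a ``correlation gap'' style statement: it shows that the value of any correlated distribution whose marginals are dominated by $y$ is controlled by the gradient at $x$ of the independent product distribution, at an additive cost of $F(x)$.
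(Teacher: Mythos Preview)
Your proposal is correct and follows the same overall strategy as the paper: show that at every point $x$ on the continuous greedy trajectory the best feasible direction improves $F$ at rate at least $f^{+}(y^{*})-F(x)$, then integrate as in~\cite{Calinescu2011}.

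The one technical difference worth noting is how the key per-step inequality is established. The paper (following Vondr\'ak) routes the argument through the auxiliary extension
\[
f^{*}(y)=\min_{S\subseteq E}\Bigl[f(S)+\sum_{j\in E}y_{j}f_{S}(j)\Bigr],
\]
first observing that $\mathbb{E}\bigl[f(R)+\sum_{j}y^{*}_{j}f_{R}(j)\bigr]\ge f^{*}(y^{*})$ and then citing the inequality $f^{*}\ge f^{+}$ from~\cite{DBLP:conf/ipco/CalinescuCPV07}. Your argument bypasses $f^{*}$ entirely: you prove $f(A)-F(x)\le\chr{A}\cdot\nabla F(x)$ for every set $A$ and then average against the optimal distribution $(\alpha_{A})$ defining $f^{+}(y^{*})$. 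This is slightly more self-contained (no need to introduce $f^{*}$ or invoke the cited comparison), and in effect it inlines the proof of $f^{*}\ge f^{+}$ into the continuous-greedy step. The trick of passing from $f_{R}(i)$ to $f_{R\setminus\{i\}}(i)$ so that the expectation becomes exactly $\partial F/\partial y_{i}(x)$ is clean. Conversely, the paper's route has the advantage of isolating $f^{*}\ge f^{+}$ as a reusable standalone fact. Either way, the underlying submodularity inequality is the same, and the conclusions are identical.
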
 This follows from a simple modification of the continuous
greedy analysis, given by Vondrák \cite{Vondrak:email}.

\subsection{Overview of the iterative randomized rounding approach}

\label{sec:overv-iter-round}

We now give a description of the general rounding approach that we
employ in both the linear and submodular case. In each case, we formulate
a mathematical programming relaxation of the following general form
\begin{equation}
\max_{x\in\brq{0,1}^{E}}\setst{g(x)}{\forall j\in\brq{\kin}:p\cdot x\in{\cal P}\br{\Min j};\,\forall j\in\brq{\kout}:x\in{\cal P}\br{\Mout j}}\label{eq:general-relaxation}
\end{equation}
 with $p\in\brq{0,1}^{E}$ being the vector of probabilities. Here
$g:[0,1]^{E}\mapsto\mathbb{R}_{\ge0}$ is an objective function chosen
so that the optimal value of \eqref{eq:general-relaxation} can be
used to bound the expected value of an optimal policy for the given
instance using the following lemma. Note that our program will always
have constraints as given in \eqref{eq:general-relaxation}, only
the objective function $g$ changes between the linear and monotone
submodular cases.

\begin{lemma} \label{lem:math-programming-bound} Let $OPT$ be the
optimal feasible strategy for some stochastic probing problem in our
general setting, and define $x_{e}=\pr{OPT\mbox{ probes }e}$. Then,
$x=(x_{e})_{e\in E}$ is a feasible solution to the related relaxation
of the form \eqref{eq:general-relaxation}. \end{lemma}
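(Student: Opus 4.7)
The plan is to verify the two types of constraints in \eqref{eq:general-relaxation} separately, in each case by expressing the relevant vector as an expectation of characteristic vectors of independent sets in the corresponding matroid.

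First I would handle the outer constraints. Let $Q$ be the (random) set of all elements probed by $OPT$ over its entire execution. By assumption on feasibility of $OPT$, at every step $t$ we have $Q^t \in \Iout$, hence $Q \in \Iout$, so in particular $\chr{Q} \in {\cal P}(\Mout j)$ for each $j \in [\kout]$. By definition $x_e = \pr{e \in Q} = \ex{(\chr{Q})_e}$, so $x = \ex{\chr{Q}}$ is a convex combination of characteristic vectors of independent sets of $\Mout j$; since matroid polytopes are convex, this gives $x \in {\cal P}(\Mout j)$.

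For the inner constraints, let $S$ be the final set of active probed elements. Feasibility of $OPT$ forces $S \in \Iin$ at all times, so $\chr{S} \in {\cal P}(\Min j)$ for each $j \in [\kin]$, and therefore $\ex{\chr{S}} \in {\cal P}(\Min j)$ by the same convexity argument. It remains to identify this expectation with $p \cdot x$. The key observation is that the event ``$OPT$ probes $e$'' depends only on the activity outcomes of elements probed strictly before $e$, and, crucially, not on whether $e$ itself is active, because the activities $\{e \text{ active}\}_{e \in E}$ are mutually independent. Hence
\[
\ex{(\chr{S})_e} = \pr{e \in S} = \pr{e \text{ probed} \wedge e \text{ active}} = \pr{e \text{ probed}} \cdot p_e = p_e x_e,
\]
so $\ex{\chr{S}} = p \cdot x$ and $p \cdot x \in {\cal P}(\Min j)$ as required.

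I do not anticipate a real obstacle; the only point that requires a moment's care is the independence step above, which needs the fact that a stochastic probing strategy is a policy whose decision to probe $e$ is measurable with respect to the outcomes of previously probed elements only, so that it is independent of the activity of $e$. Once that is noted, the rest is just convexity of the matroid polytopes together with the trivial fact that every attained realization of $Q$ and $S$ is independent in the respective systems.
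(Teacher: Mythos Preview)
Your proof is correct and follows essentially the same approach as the paper: express $x$ (resp.\ $p\cdot x$) as the expectation of the characteristic vector of the probed set $Q$ (resp.\ the successfully probed set $S$), and use convexity of the matroid polytopes. You are in fact slightly more careful than the paper, which simply asserts $\ex{\chr S}=p\cdot x$ without spelling out the independence argument you give.
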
 \begin{proof}
Since $OPT$ is a feasible strategy, the set of elements $Q$ probed
by any execution of $OPT$ is always an independent set of each outer
matroid ${\cal M}=\br{E,\Iout_{j}}$, i.e.\ $\forall_{j\in\brq{\kout}}Q\in\Iout_{j}$.
Thus, for any $j\in\brq{\kout}$, the vector $\ex{\chr Q}=x$ may
be represented as a convex combination of vectors from $\setst{\chr A}{A\in\Iout_{j}}$,
and hence $x\in{\cal P}\br{\Mout j}$. Analogously, the set of elements
$S$ that were successfully probed by $OPT$ satisfy $\forall_{j\in\brq{\kin}}S\in\Iin_{j}$
for every possible execution of $OPT$. Hence, for any $j\in\brq{\kin}$
the vector $\ex{\chr S}=p\cdot x$ may be represented as a convex
combination of vectors from $\setst{\chr A}{A\in\Iin_{j}}$, and hence
$x\in{\cal P}\br{\Min j}$. \end{proof}

Suppose that $f$ is the objective function for a given instance of
stochastic probing over a universe $E$ of elements. Our algorithm
first obtains a solution $x^{0}$ to a relaxation of the form \eqref{eq:general-relaxation}
using either linear programming or the continuous greedy algorithm.
Our algorithm proceeds iteratively, maintaining a current set of constraints,
a current fractional solution $x$, and a current set $S$ of elements
that have been successfully probed. Initially, the constraints are
as given in \eqref{eq:general-relaxation}, $x=x^{0}$, and $S=\emptyset$.
At each step, the algorithm selects single element $\e$ to probe,
then permanently sets $x_{\e}$ to 0. It then updates the outer constraints,
replacing $\Mout j$ with $\Mout j/\e$ for each $j\in[\kout]$. If
the probe succeeds, the algorithm adds $\e$ to $S$ and then updates
the inner constraints, replacing $\Min j$ with $\Min j/\e$ for each
$j\in[\kin]$. Finally, we modify our fractional solution $x$ so
that it is feasible for the updated constraints. The algorithm terminates
when the current solution $x=0^{E}$.

In order to analyze the approximation performance of our algorithm,
we keep track of a current potential value $z$, related to the value of the
remaining fractional solution $x$.  Let $x^{t}$, $z^{t}$, and $S^{t}$ be the current value of $x$, $z$, and $S$
at the beginning of step $t+1$. We show that initially we have $z^{0}=g\br{x^{0}}\ge\beta\cdot\ex{OPT}$
for some constant $\beta\in[0,1]$, and then analyze the expected
decrease $z^{t}-z^{t+1}$ at an arbitrary step $t+1$. We show that
for each step we have $\alpha\cdot\ex{z^{t}-z^{t+1}}\le\ex{f(S^{t+1})-f(S^{t})}$,
for some $\alpha<1$. That is, the expected increase in the value
of the current solution is at least $\alpha$ times the expected decrease
in $z$. Then, we employ the following Lemma to conclude that the
algorithm is an $\alpha\beta$-approximation in expectation. The proof
is based on Doob's optional stopping theorem for martingales. Hence,
we need to deploy language from martingale theory, such as stopping
time and filtration.  See~\cite{probwithmartin} for extended background
on martingale theory.

\begin{restatable}{lemma}{martingale} \label{lem:martingale} Suppose
the algorithm runs for $\tau$ steps and that $z^{0}=g\br{x^{0}}\ge\beta\cdot\ex{OPT}$,
$z^{\tau}=0$. Let $(\mathcal{F}_{t})_{t\ge0}$ be the filtration
associated with our iterative algorithm, where $\mathcal{F}_{i}$
represents all information available after the $i$th iteration. Finally,
suppose that in each step in our iterative rounding procedure, $\excond{f(S^{t+1})-f(S^{t})}{\mathcal{F}_{t}}\ge\alpha\cdot\excond{z^{t}-z^{t+1}}{\mathcal{F}_{t}}$.
Then, the final solution $S^{\tau}$ produced by the algorithm satisfies
$\ex{f(S^{\tau})}\ge\alpha\beta\cdot\ex{OPT}$. \end{restatable}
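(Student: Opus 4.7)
The plan is to apply Doob's optional stopping theorem to the process $M_t := f(S^t) + \alpha\, z^t$ adapted to the filtration $(\mathcal{F}_t)_{t\ge 0}$. The hypothesis
\[
\excond{f(S^{t+1}) - f(S^t)}{\mathcal{F}_t} \ge \alpha \cdot \excond{z^t - z^{t+1}}{\mathcal{F}_t}
\]
rearranges exactly to $\excond{M_{t+1}}{\mathcal{F}_t} \ge M_t$, so $(M_t)$ is a submartingale with respect to $(\mathcal{F}_t)$. Since the algorithm permanently zeroes one coordinate of $x$ at each step and terminates once $x = 0^E$, we have the deterministic bound $\tau \le |E|$, which makes $\tau$ a bounded stopping time. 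Together with the boundedness of $f$ on $2^E$ and the boundedness of $z^t \le z^0$ along the (supermartingale in $z$) run, every $M_t$ is integrable, so the standard form of the optional stopping theorem applies.

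Applying it to $(M_t)$ at the stopping time $\tau$ yields $\ex{M_\tau} \ge \ex{M_0}$. I would then evaluate the two sides using the stated boundary conditions. At initialization $S^0 = \emptyset$, so by the normalization $f(\emptyset) = 0$ assumed in the preliminaries we get
\[
\ex{M_0} = \ex{f(\emptyset) + \alpha\, z^0} = \alpha \cdot \ex{z^0} \ge \alpha\beta \cdot \ex{OPT},
\]
where the last inequality is the hypothesis $z^0 \ge \beta \cdot \ex{OPT}$ (which holds almost surely, hence in expectation). At termination $z^\tau = 0$ by assumption, so $M_\tau = f(S^\tau)$. Chaining the two bounds gives
\[
\ex{f(S^\tau)} = \ex{M_\tau} \ge \ex{M_0} \ge \alpha\beta \cdot \ex{OPT},
\]
which is the claim.

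The main thing to be careful about is not really the martingale machinery itself, which is essentially forced by the algebraic form of the hypothesis, but rather making sure the ingredients line up: that $S^0 = \emptyset$ so the $f$--contribution at time $0$ vanishes, that the random choices at step $t+1$ (namely which element to probe, and whether that probe succeeds) are indeed $\mathcal{F}_t$--measurable after conditioning as stated, and that $\tau$ is bounded so no uniform integrability argument is required. All three are immediate from the description of the iterative procedure in Section~\ref{sec:overv-iter-round}, so the proof reduces to the two--line martingale computation above.
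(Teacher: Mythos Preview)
Your proof is correct and is essentially the paper's own argument: the paper defines increments $D_t = (f(S^t)-f(S^{t-1})) - \alpha(z^{t-1}-z^t)$ and applies optional stopping to the submartingale $\sum_{i\le t} D_i$, which is precisely your $M_t = f(S^t)+\alpha z^t$ shifted by the constant $-M_0$. One minor quibble: your parenthetical that $z^t\le z^0$ because $z$ is a ``supermartingale'' is not actually established (and is not needed); integrability follows simply because $f$ and $z^t$ take values in a bounded range on the finite universe $E$.
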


\begin{proof} Let $G_{t+1}$ be the gain $f(S^{t+1})-f(S^{t})$ in
$f$ at step $t+1$, and let $L_{t+1}$ be the corresponding loss
$z^{t}-z^{t+1}$ in $z$ at time $t+1$. We set $G_{0}=L_{0}=0$.
Define variable $D_{t}=G_{t}-\alpha\cdot L_{t}$. The sequence of
random variables $\br{D_{0}+D_{1}+...+D_{t}}_{t\geq0}$ forms a sub-martingale,
i.e.\ 
\[
\excond{\sum_{i=0}^{t+1}D_{i}}{{\cal F}_{t}}=\sum_{i=0}^{t}D_{i}+\excond{G_{t+1}-\alpha\cdot L_{t+1}}{{\cal F}_{t}}\geq\sum_{i=0}^{t}D_{i}.
\]
 Let $\tau$ be the step in which the algorithm terminates, i.e.~$\tau=\min\setst t{x^{t}=0^{E}}$.
Then, the event $\tau=t$ depends only on $\mathcal{F}_{0},\ldots,\mathcal{F}_{t}$,
so $\tau$ is a stopping time. Also, by the definition of the algorithm
$x^{\tau}=0^{E}$. It is easy to verify that all the assumptions of
Doob's optional stopping theorem are satisfied, and from this theorem
we get that $\ex{\sum_{i=0}^{\tau}D_{i}}\ge\ex{D_{0}}$. Since $D_{0}=0$,
we have 
\[
0\le\ex{\sum_{i=0}^{\tau}D_{i}}=\ex{\sum_{i=0}^{\tau}G_{i}-\alpha\cdot\sum_{i=0}^{\tau}L_{i}}=\ex{\sum_{i=0}^{\tau}G_{i}}-\alpha\cdot\ex{\sum_{i=0}^{\tau}L_{i}}.
\]
 It remains to note that $\sum_{i=0}^{\tau}G_{i}=f\br{S^{\tau}}$
is the total gain of the algorithm, so $\ex{\sum_{i=0}^{\tau}G_{i}}=\ex{f(S^{\tau})}$.
On the other hand, $\sum_{i=0}^{\tau}L_{i}=g(x^{0})-g(x^{\tau})=g(x^{0})\ge\beta\cdot\ex{OPT}$.
\end{proof} Henceforth, we will implicitly condition on all information
$\mathcal{F}_{t}$ available to the algorithm just before it makes
step $t+1$. That is, when discussing step $t+1$ of the algorithm,
we write shortly $\ex{\cdot}$ instead of $\excond{\cdot}{\mathcal{F}_{t}}$.

\section{Linear stochastic probing}

In this setting, we are given a weight $w_{e}$ and a probability
$p_{e}$ for each element $e\in E$ and $f(S)$ is simply $\sum_{e\in S}w_{e}$.
We consider the relaxation \eqref{eq:general-relaxation} in which
$g(x)=f(x)$. Then, Lemma \ref{lem:math-programming-bound} shows
that the optimal policy $OPT$ must correspond to some feasible solution
$x^{*}$ of \eqref{eq:general-relaxation}. Moreover, because $f$
is linear, $\ex{OPT}=\sum_{e\in S}\pr{OPT\mbox{ probes e}}p_{e}w_{e}=\sum_{e\in S}x_{e}p_{e}w_{e}=f(x^{*})$.

At each step, our algorithm randomly selects an element $\e$ to probe.
Let $\Sigma=\sum_{e\in E}x_{e}$ Then, our algorithm chooses element
$e$ with probability $x_{e}/\Sigma$. As discussed in the previous
overview, it then sets $x_{\e}=0$ and carries out the probe, updating
the matroid constraints to reflect both the choice of $\e$ and the
probe. Finally, it updates $x$ to obtain a new fractional solution
that is feasible in the updated constraints. Note that because $x_{e}$
is set to 0 after probing $e$, we will never probe an element $e$
twice.

Let us now describe how to update the current solution $x$ to ensure
feasibility in each of the updated matroid constraints. Let $\e$
be the element that we probed and let $\Mout j$ be some outer matroid.
Currently we have $x\in\mathcal{P}(\Mout j)$ and we must obtain a
solution $x'$ so that $x'\in\mathcal{P}(\Mout j/\e)$. We represent
the vector $x$ as a convex combination of independent sets $x=\sum_{i=1}^{m}\betout i\chr{\Bout i}$,
where $\Bout 1,\ldots,\Bout m$ is the support of $x$ with respect
to matroid $\Mout j$. We obtain $x'\in\mathcal{P}(\Mout j/\e)$ by
replacing each independent set $\Bout b$ for which $\Bout b+\e\not\in\Mout j$
with some other set $\Bout c$ such that $\Bout c+\e\in\Mout j$.
We pick one set $\Bout a$ with $\e\in\Bout a$ to guide the update
process. We pick the set $\Bout a\ni\e$ at random with probability
$\betout a/x_{\e}$ (note that for any element $e$, $\sum_{a:e\in\Bout a}\betout a=x_{e}$).
For any set $\Bout b:\e\notin\Bout b$, let $\phi_{a,b}$ be the mapping
from $\Bout a$ into $\Bout b$ from Corollary~\ref{cor:mapping}.
If $\phi_{a,b}\br{\e}=\perp$, or $\phi_{a,b}(\e)=\e$, then in fact
$\Bout b+\e\in\Mout j$, and we can just include $\Bout b$ in the
support of $\Mout j/\e$. Otherwise, we substitute $\Bout b$ with
$\Bout b-\phi_{a,b}\br{\e}$ in the support of $\xE$ in ${\cal P}\br{\Mout j/\e}$,
since we know that $\Bout b-\phi_{a,b}\br{\e}+\e\in\Mout j$.

Similarly, if $\e$ is successfully probed we must perform a support
update for each inner matroid. Here, we proceed as in the case of
the outer matroids, except we have $p\cdot x\in\Min j$ and must obtain
$x'$ such that $p\cdot x'\in\Min j/e$. We write $p\cdot x$ as a
combination independent sets $p\cdot x=\sum_{i=1}^{m}\betin i\chr{\Bin i}$,
and now choose a random set $\Bin a\ni\e$ to guide the support update
with probability $\betin a/p_{\e}x_{\e}$. (note that for any element
$e$, we have $\sum_{a:e\in\Bin a}\betin a=p_{e}x_{e}$). As in the
previous case, we replace $\Bin b$ with $\Bin b-\phi_{a,b}(\e)$
for each base $\Bin b$ such that $\Bin b+\e\not\in\Min j$.

We now turn to the analysis of the probing algorithm. Suppose that
the algorithm runs for $\tau$ steps and consider the quantity $z^{t}=f(x^{t})$.
Then, $z^{0}=f(x^{0})\ge\ex{OPT}$ and $z^{\tau}=f(0^{E})=0$, so
the conditions of Lemma \ref{lem:martingale} are satisfied with $\beta=1$.
It remains to bound the expected loss $\ex{z^{t}-z^{t+1}}$ in step
$t+1$. In order to do this, we consider the value $\delta_{i}=p_{i}(x_{i}^{t}-x_{i}^{t+1})$
for each $i\in E$. We consider arbitrary step $t+1$, but we are
going to denote $x^{t}$ by $x$ and $x^{t+1}$ by $x'$. The decrease
$\delta_{i}$ may be caused both by the probing step, in which we
set $x_{\e}'$ to 0, or by the matroid update step, in which we decrease
several coordinates of $x$. Let us first consider the losses due
to each matroid update.

\begin{lemma} \label{lem:outerdrop}Let $x$ and $x'$ be the current
fractional solution before and after one update for a given outer
matroid $\Mout j$. Then, for each $i\in E$, we have $\ex{\delta_{i}^{out}}\triangleq\ex{p_{i}(x_{i}-x'_{i})}\le\frac{1}{\Sigma}\br{1-x_{i}}p_{i}x_{i}$.
\end{lemma}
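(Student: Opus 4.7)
My plan is to compute $\ex{x_i - x_i'}$ directly from the described update procedure and then multiply by $p_i$. Decompose $x = \sum_b \betout b\, \chr{\Bout b}$ in $\mathcal{P}(\Mout j)$. The algorithm draws the probe $\e$ with probability $x_\e / \Sigma$ and, conditionally on $\e$, the guide $\Bout a \ni \e$ with probability $\betout a / x_\e$, so the pair $(\e, \Bout a)$ has joint probability $\betout a / \Sigma$. For each base $\Bout b$ with $\e \notin \Bout b$, it is replaced by $\Bout b - \phi_{a,b}(\e)$, removing mass $\betout b$ from exactly the coordinate $\phi_{a,b}(\e)$ (or from no coordinate if $\phi_{a,b}(\e) = \bot$). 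Collecting the bases that strip $i$ yields
\[
x_i - x_i' \;=\; \sum_{b:\, \e \notin \Bout b,\ \phi_{a,b}(\e) = i} \betout b.
\]

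Taking expectation over $(\e, \Bout a)$ and swapping the order of summation gives
\[
\ex{x_i - x_i'} \;=\; \frac{1}{\Sigma} \sum_{a, b} \betout a\, \betout b \cdot \bigl|\{\e : \e \in \Bout a,\ \e \notin \Bout b,\ \phi_{a,b}(\e) = i\}\bigr|.
\]
The key combinatorial step is to bound this inner cardinality. Property~2 of Corollary~\ref{cor:mapping} makes $\phi_{a,b}$ injective into its non-$\bot$ values, so at most one $\e \in \Bout a$ maps to $i$; and if $i \in \Bout a \cap \Bout b$, property~1 pins that unique preimage to $i$ itself, forcing $\e = i$. But $\e \notin \Bout b$ together with $i \in \Bout b$ rules this out, so the only contributing pairs have $i \in \Bout b \setminus \Bout a$. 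Hence the inner count is at most $\indi{i \in \Bout b \setminus \Bout a}$.

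Plugging this bound in, the double sum factors:
\[
\ex{x_i - x_i'} \;\le\; \frac{1}{\Sigma} \Big(\sum_{a:\, i \notin \Bout a} \betout a\Big) \Big(\sum_{b:\, i \in \Bout b} \betout b\Big) \;=\; \frac{(1 - x_i)\, x_i}{\Sigma},
\]
using $\sum_b \betout b = 1$ and $\sum_{b:\, i \in \Bout b} \betout b = x_i$; multiplication by $p_i$ completes the claim. I expect the combinatorial bound on the inner count to be the main obstacle: one must simultaneously exploit the injectivity of $\phi_{a,b}$ and its identity action on $\Bout a \cap \Bout b$ to discard every pair with $i \in \Bout a$, and it is exactly this discarding that produces the critical factor $(1 - x_i)$ instead of a trivial $1$.
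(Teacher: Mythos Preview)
Your proof is correct and follows essentially the same approach as the paper's: both arguments decompose over the choice of guide set $\Bout a$, use that $\phi_{a,b}$ fixes $\Bout a\cap\Bout b$ pointwise to rule out any contribution when $i\in\Bout a$, and then sum $\betout b$ over $b$ with $i\in\Bout b$ to produce the factor $x_i$. The only difference is organizational---the paper conditions on $\event a$ first and computes $\pr{\event a}=\size{\Bout a}\betout a/\Sigma$ together with the uniform conditional law of $\e$ over $\Bout a$, whereas you work directly with the joint law $\pr{(\e,\Bout a)}=\betout a/\Sigma$ and sum over triples $(\e,a,b)$; your explicit invocation of the injectivity property~2 of Corollary~\ref{cor:mapping} is arguably a bit cleaner than the paper's somewhat implicit use of it.
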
 \begin{proof} The expectation $\ex{\delta_{i}^{out}}$
is over the random choice of an element $\e$ to probe and the random
choice of an independent set to guide the update. Let $\event a$
denote the event that some set $\Bout a$ is chosen to guide a support
update for $\Mout j$.

In a given step the probability that the set $\Bout a$ is chosen
to guide the support update is equal to 
\begin{equation}
\pr{\event a}=\sum_{e\in\Bout a}\frac{x_{e}}{\Sigma}\frac{\betout a}{x_{e}}=\sum_{e\in\Bout a}\frac{\betout a}{\Sigma}=\size{\Bout a}\frac{\betout a}{\Sigma}.\label{eq:fact1}
\end{equation}
 Moreover, conditioned on the fact $\Bout a$ was chosen, the probability
that an element $e\in\Bout a$ was probed is uniform over the elements
of $\Bout a$: 
\begin{equation}
\prcond{e\mbox{ probed }}{\event a}=\pr{e\mbox{ probed}\wedge\event a}\left/\pr{\event a}\right.=\frac{x_{e}}{\Sigma}\frac{\betout a}{x_{e}}\left/\size{\Bout a}\frac{\betout a}{\Sigma}\right.=\frac{1}{\size{\Bout a}}.\label{eq:fact2}
\end{equation}
 We can write the expected decrease as $\ex{\delta_{i}^{out}}=\sum_{a=1}^{m}\pr{\event a}\cdot\excond{\delta_{i}^{out}}{\Bout a}$.
Note that for all $i\in\Bout a$, we have $\phi_{a,b}(i)=i$ for every
set $\Bout b\ni i$. Thus, the support update will not change $x_{i}$
for any $i\in\Bout a$, and so $\sum_{a=1}^{m}\pr{\event a}\cdot\excond{\delta_{i}^{out}}{\event a}=\sum_{a:i\notin\Bout a}\pr{\event a}\cdot\excond{\delta_{i}^{out}}{\event a}.$

Now let us condition on taking $\Bout a$ to guide the support update.
Consider a set $\Bout b\ni e$. If we remove $i$ from $\Bout b$,
and hence decrease $p_{i}x_{i}$ by $p_{i}\betout b$, it must be
the case that we have chosen to probe the single element $\phi_{a,b}^{-1}(i)\in\Bout a$.
The probability that we probe this element is $\frac{1}{\size{\Bout a}}$.
Hence 
\begin{align*}
 & \sum_{a:i\notin\Bout a}\pr{\event a}\cdot\excond{\delta_{i}^{out}}{\Bout a}\\
 & =\sum_{a:i\notin\Bout a}\pr{\event a}\cdot\br{\sum_{b:i\in\Bout b}p_{i}\betout b\cdot\pr{\phi_{a,b}^{-1}(i)\mbox{ is probed}\left|\event a\right.}}\\
 & \leq\sum_{a:i\notin\Bout a}\pr{\event a}\cdot\br{\sum_{b:i\in\Bout b}p_{i}\betout b\cdot\frac{1}{\size{\Bout a}}}\\
 & =\sum_{a:i\notin\Bout a}\pr{\event a}\cdot\frac{p_{i}x_{i}}{\size{\Bout a}}\\
 & =\sum_{a:i\notin\Bout a}\size{\Bout a}\frac{1}{\Sigma}\betout a\cdot\frac{p_{i}x_{i}}{\size{\Bout a}}\quad=\quad\frac{1}{\Sigma}\sum_{a:i\notin\Bout a}\betout ap_{i}x_{i}\quad=\quad\frac{1}{\Sigma}\br{1-x_{i}}p_{i}x_{i}.\quad\qedhere
\end{align*}
 \end{proof} \begin{lemma} \label{lem:innerdrop}Let $x$ be the
current fractional solution before and after one update for a given
inner matroid $\Min j$. Then, for each $i\in E$, we have $\ex{\delta_{i}^{in}}\triangleq\ex{p_{i}(x_{i}-x'_{i})}\le\frac{1}{\Sigma}\br{1-p_{i}x_{i}}p_{i}x_{i}$.
\end{lemma}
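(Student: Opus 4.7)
The plan is to mirror the proof of Lemma~\ref{lem:outerdrop} as closely as possible, adjusting only where the inner matroid update differs from the outer one. The two structural differences are that (i) the support we decompose is of $p\cdot x$ rather than $x$, so $\sum_{b:i\in\Bin b}\betin b = p_i x_i$ and removing $i$ from $\Bin b$ decreases $p_i x_i$ (hence $p_i(x_i-x_i')$) by exactly $\betin b$, not $p_i\betin b$; and (ii) the support update is triggered only when the probe succeeds, which contributes an extra factor of $p_\e$, but this cancels with the $1/(p_\e x_\e)$ that appears in the selection probability of the guiding base.

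Concretely, I would define $\eventsucc a$ to be the event that $\Bin a$ is chosen to guide the support update for $\Min j$. Then I would compute
\[
\pr{\eventsucc a} \;=\; \sum_{e\in\Bin a}\frac{x_e}{\Sigma}\cdot p_e\cdot\frac{\betin a}{p_e x_e} \;=\; \size{\Bin a}\frac{\betin a}{\Sigma},
\]
exactly matching equation~\eqref{eq:fact1} from the outer case. The same conditional-probability calculation as in~\eqref{eq:fact2} then yields $\prcond{e\text{ probed}}{\eventsucc a}=1/\size{\Bin a}$ uniformly over $e\in\Bin a$.

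Next, I would observe that for every $i\in\Bin a$ the first condition of Corollary~\ref{cor:mapping} gives $\phi_{a,b}(i)=i$ for all $\Bin b\ni i$, so $i$ is never removed when $\Bin a$ guides the update; hence only the terms with $i\notin\Bin a$ contribute. For such $a$, and for each $\Bin b$ containing $i$, the removal of $i$ from $\Bin b$ happens precisely when the probed element equals $\phi_{a,b}^{-1}(i)\in\Bin a$, an event of conditional probability at most $1/\size{\Bin a}$. Since this removal decreases $p_i x_i$ by $\betin b$, assembling the expectation gives
\[
\ex{\delta_i^{in}} \;\le\; \sum_{a:i\notin\Bin a}\pr{\eventsucc a}\cdot\frac{1}{\size{\Bin a}}\sum_{b:i\in\Bin b}\betin b \;=\; \sum_{a:i\notin\Bin a}\size{\Bin a}\frac{\betin a}{\Sigma}\cdot\frac{p_i x_i}{\size{\Bin a}},
\]
using $\sum_{b:i\in\Bin b}\betin b = p_i x_i$ from the decomposition of $p\cdot x$. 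The sum telescopes via $\sum_{a:i\notin\Bin a}\betin a = 1-p_i x_i$, yielding the claimed bound $\tfrac{1}{\Sigma}(1-p_i x_i)p_i x_i$.

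I do not expect any real obstacle, since the argument is essentially the same as for the outer matroid. The only place one has to be careful is bookkeeping the factors of $p_e$: they appear in the probability of triggering the update but cancel in $\pr{\eventsucc a}$, and they are absent from the per-swap drop $\betin b$ because we decompose $p\cdot x$ rather than $x$. Correctly tracking these cancellations is what turns the outer-case factor $(1-x_i)$ into the inner-case factor $(1-p_i x_i)$.
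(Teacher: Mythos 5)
Your proposal is correct and follows essentially the same argument as the paper's own proof: you define the same event $\eventsucc a$ (the paper phrases it as ``probe succeeded and $\Bin a$ guides the update,'' which is the same thing), compute the same $\pr{\eventsucc a}$ and uniform conditional probing probability, and assemble the same telescoping bound using $\sum_{b:i\in\Bin b}\betin b = p_i x_i$ and $\sum_{a:i\notin\Bin a}\betin a = 1-p_i x_i$. In particular you correctly identify both places where the factor $p_e$ enters and cancels, which is exactly the bookkeeping the paper relies on.
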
 \begin{proof} Because we only perform a support update
when the probe of a chosen element is successful, the expectation
$\ex{\delta_{i}^{in}}$ is over the random result of the probe, as
well as the random choice of element $\e$ to probe and the random
choice of a base to guide the update. We proceed as in the case of
Lemma \ref{lem:outerdrop}, now letting $\eventsucc a$ denote the
event that the probe was successful and $\Bin a$ is chosen to guide
the support update. We have: 
\begin{align*}
 & \pr{\eventsucc a}=\sum_{e\in\Bin a}p_{e}\frac{x_{e}}{\Sigma}\frac{\betin a}{p_{e}x_{e}}=\sum_{e\in\Bin a}\frac{\betin a}{\Sigma}=\size{\Bin a}\frac{\betin a}{\Sigma},\\
 & \prcond{e\mbox{ probed }}{\eventsucc a}=\pr{e\mbox{ probed}\wedge\eventsucc a}\left/\pr{\eventsucc a}\right.=p_{e}\frac{x_{e}}{\Sigma}\frac{\betin a}{p_{e}x_{e}}\left/\size{\Bin a}\frac{\betin a}{\Sigma}\right.=\frac{1}{\size{\Bin a}}.
\end{align*}
 By a similar argument as in Lemma \ref{lem:outerdrop} we then have
that $\ex{\delta_{i}^{in}}$ is at most: 
\begin{multline*}
\sum_{a:i\notin\Bin a}\pr{\eventsucc a}\cdot\br{\sum_{b:i\in\Bin b}\betin b\cdot\frac{1}{\size{\Bin a}}}=\sum_{a:i\notin\Bin a}\pr{\eventsucc a}\cdot\frac{p_{i}x_{i}}{\size{\Bin a}}\\
=\sum_{a:i\notin\Bout a}\size{\Bin a}\frac{1}{\Sigma}\betin a\cdot\frac{p_{i}x_{i}}{\size{\Bin a}}=\frac{1}{\Sigma}\sum_{a:i\notin\Bin a}\betin ap_{i}x_{i}=\frac{1}{\Sigma}(1-p_{i}x_{i})p_{i}x_{i}.\qedhere
\end{multline*}
 \end{proof} We perform the matroid updates sequentially for each
of the $\kin$ and $\kout$ matroids. Note that once we decrease a
coordinate $x_{i}$ to 0, it cannot be altered in any further updates,
so no coordinate is ever decreased below 0. Now, we consider the expected
decrease $\ex{\delta_{i}}=\ex{p_{i}(x_{i}-x_{i}')}$ due to both the
initial probing step, in which we decrease the probed element's coordinate
to 0, and the following matroid updates. We have: 
\begin{align}
\ex{\delta_{i}} & \le\pr{i\mbox{ probed}}p_{i}x_{i}+\kout\ex{\delta_{i}^{out}}+\kin\ex{\delta_{i}^{in}}\nonumber \\
 & =\frac{1}{\Sigma}p_{i}x_{i}^{2}+\kout\frac{1}{\Sigma}(1-x_{i})p_{i}x_{i}+\kin\frac{1}{\Sigma}(1-p_{i}x_{i})p_{i}x_{i}\nonumber \\
 & =\frac{1}{\Sigma}\kout p_{i}x_{i}-\frac{1}{\Sigma}(\kout-1)p_{i}x_{i}^{2}+\frac{1}{\Sigma}\kin p_{i}x_{i}-\frac{1}{\Sigma}\kin p_{i}^{2}x_{i}^{2}\nonumber \\
 & \leq\frac{\kout+\kin}{\Sigma}p_{i}x_{i}.\label{eq:linearloss}
\end{align}
 Because $z^{t}$ is a linear function of $x^{t}$, the expected total
decrease of $z$ in this step is then 
\[
\ex{z^{t}-z^{t+1}}=\sum_{i}\ex{\delta_{i}}w_{i}\le\frac{\kout+\kin}{\Sigma}\sum_{i}p_{i}x_{i}w_{i}.
\]
 On the other hand, the expected gain in $f(S)$ is $\sum_{e\in E}\pr{e\mbox{ probed}}p_{e}w_{e}=\frac{1}{\Sigma}\sum_{e\in E}w_{e}p_{e}x_{e}$.
Thus, by Lemma \ref{lem:martingale} the final solution $S^{\tau}$
produced by the algorithm satisfies $\ex{f(S^{\tau})}\ge\frac{1}{\kout+\kin}\ex{OPT}$.

\section{Submodular stochastic probing}

\label{sec:submodular-case}

We now consider the case in which we are given a set of elements $E$
each becoming active with probability $p_{e}$, and we seek to maximize
a given submodular function $f:2^{E}\mapsto\mathbb{R}_{\geq0}$. In
this case, we consider the relaxation \eqref{eq:general-relaxation}
in which $g(x)=f^{+}(p\cdot x)$. Then, Lemma \ref{lem:math-programming-bound}
shows that the optimal policy $OPT$ must correspond to some feasible
solution $x^{*}$ of \eqref{eq:general-relaxation}, where $x_{e}^{*}=\pr{OPT\mbox{ probes }e}$,
and hence $\pr{OPT\mbox{ takes }e}=p_{e}x_{e}^{*}$. The function
$f^{+}(p\cdot x^{*})$ gives the maximum value of $\exls{S\sim\mathcal{D}}{f(S)}$
over all distributions $\mathcal{D}$ satisfying $\prls{S\sim\mathcal{D}}{e\in S}=x_{e}^{*}p_{e}$.
Thus, $f^{+}(p\cdot x^{*})\ge\ex{OPT}$.

In general, we cannot obtain an optimal solution to this relaxation.
Instead, we apply the continuous greedy algorithm to a variant of
\eqref{eq:general-relaxation} in which $g(x)$ is given by $F(p\cdot x)$
to obtain an initial solution $x^{0}$. From Lemma \ref{lem:continuous-greedy}
we then have $F(p\cdot x^{0})\ge(1-1/e)f^{+}(p\cdot x^{*})\ge(1-1/e)\ex{f(OPT)}$.

Given $x^{0}$, our algorithm is exactly the same as in the linear
case. However, we must be more careful in our analysis. We define
the quantity 
\[
z^{t}=F(\chr{S^{t}}+p\cdot x^{t})-F(\chr{S^{t}})
\]
 where $S^{t}$ and $x^{t}$ are, respectively, the set of successfully
probed elements and the current fractional solution at time $t$.
Note that because after probing an element we set its variable to
zero, for all elements $i\in S$ we have $x_{i}=0$, and so indeed
$\chr{S^{t}}+p\cdot x^{t}\in[0,1]^{E}$. Suppose that the algorithm
runs for $\tau$ iterations, and note that $z^{0}=F(p\cdot x^{0})\ge(1-1/e)\ex{f(OPT)}$
and $z^{\tau}=F(\chr{S^{\tau}+p\cdot0^{E}})-F(S^{\tau})=0$, so the
conditions of Lemma \ref{lem:martingale} are satisfied with $\beta=(1-1/e)$.

We now analyze the expected decrease $z^{t}-z^{t+1}$ due to step
$t+1$ of the algorithm. Suppose that the algorithm selects element
$i$ to probe. Then, we have $S^{t+1}=S^{t}+i$ with probability $p_{i}$
and $S^{t+1}=S^{t}$ otherwise. Thus, we have 
\begin{align}
\ex{z^{t}-z^{t+1}}&=\ex{F(\chr{S^{t}}+x^{t}\cdot p)-F(\chr{S^{t}})}-\ex{F(\chr{S^{t+1}}+x^{t+1}\cdot p)-F(\chr{S^{t+1}})}\notag \\
&=\ex{F(\chr{S^{t+1}})-F(\chr{S^{t}})}+\ex{F(\chr{S^{t}}+x^{t}\cdot p)-F(\chr{S^{t+1}}+x^{t+1}\cdot p)} \notag \\
&\le\ex{F(\chr{S^{t+1}})-F(\chr{S^{t}})}+\ex{F(\chr{S^{t}}+x^{t}\cdot p)-F(\chr{S^{t}}+x^{t+1}\cdot p)}
\label{eq:big-nasty-expectation}
\end{align}
where in the last line, we have used the fact that $S^{t+1} \ge S^{t}$ and $F$ is increasing in all directions (Lemma \ref{lem:partialF}).
We shall first bound the second expectation in \eqref{eq:big-nasty-expectation}.  We consider the vector $\delta$ of decreases in $x$, given by $\delta=(x^{t}-x^{t+1})\cdot p$.
For each $i\in E$, let $w_{i}=\Df F{x_{i}}(\chr{S^{t}})=F(\chr{S^{t}+i})-F(\chr{S^{t}})$.
Let $y=\chr{S^{t}}+x^{t}\cdot p$, and suppose that we decrease the
coordinates of~$y$ one at a time to obtain $y-\delta=\chr{S^{t}}+x^{t+1}\cdot p$,
letting $y^{i}$ be the value of $y$ after the first $i-1$ coordinates
have been decreased.%
\footnote{With slight abuse of notation, we write $x_{i}$ for the value of
the $i$th decreased coordinate of $x$ and $\chr i$ for the characteristic
vector of this coordinate. That is, we identify an element with its
index%
} We then have: 
\iffalse
\begin{align*}
F(y)-F(y-\delta) & =\sum_{i}F(y^{i})-F(y^{i+1})\\
 & =\sum_{i}F(y^{i})-F(y^{i}-\delta_{i}\chr i)\\
 & =\sum_{i}\delta_{i}\Df F{x_{i}}(y^{i}-\delta_{i}\chr i)\\
 & \le\sum_{i}\delta_{i}\Df F{x_{i}}(\chr{S^{t}})\\
 & =\sum_{i}\delta_{i}w_{i},
\end{align*}
\fi
\begin{multline*}
F(y)-F(y-\delta)  =\sum_{i}F(y^{i})-F(y^{i+1})  =\sum_{i}F(y^{i})-F(y^{i}-\delta_{i}\chr i)\\
  =\sum_{i}\delta_{i}\Df F{x_{i}}(y^{i}-\delta_{i}\chr i) \le\sum_{i}\delta_{i}\Df F{x_{i}}(\chr{S^{t}}) =\sum_{i}\delta_{i}w_{i},
\end{multline*}
where the third equality follows from the fact that $F$ is linear
when one coordinate is changed (Lemma \ref{lem:multilinear}), while
the inequality follows from the fact that the partial derivatives
of $F$ are coordinate-wise non-increasing (Lemma \ref{lem:partialF})
and $y^{i}-\delta_{i}\chr i\ge\chr{S^{t}}$ for all $i$. Thus, we
have: 
\[
\ex{F(y)-F(y-\delta)}\le\ex{\sum_{i}\delta_{i}w_{i}}=\sum_{i}\ex{\delta_{i}}\cdot w_{i}.
\le \frac{1}{\Sigma}(\kout + \kin)\sum_{i}p_{i}x^{t}_{i}w_{i},
\]
where the last inequality follows, as in the linear case, from inequality \eqref{eq:linearloss}.

Returning to the first expectation in \eqref{eq:big-nasty-expectation}, we note that:
\[
\ex{F(\chr{S^{t+1}})-F(\chr{S^{t}})}=\sum_{i}\pr{i\mbox{ probed}}p_{i}(F(S^{t}+i)-F(S^{t}))=\frac{1}{\Sigma}\sum_{i}p_{i}x_{i}^{t}w_{i}.
\]
Thus, the total expected decrease $\ex{z^{t}-z^{t+1}}$ from one
step of our rounding procedure is at most: 
\[
\frac{1}{\Sigma}\sum_{i}p_{i}x_{i}^{t}w_{i}+\frac{1}{\Sigma}\sum_{i}(\kout+\kin)p_{i}x_{i}^{t}w_{i}=(\kout+\kin + 1)\frac{1}{\Sigma}\sum_{i}p_{i}x_{i}^{t}w_{i}.
\]
 On the other hand, the expected increase of $f(S^{t+1})-f\br{S^{t}}$
in this step is: 
\[
\frac{1}{\Sigma}\sum_{e\in E}p_{e}x_{e}^{t}(f(S^{t}+e)-f(S^{t}))=\frac{1}{\Sigma}\sum_{e\in E}p_{e}x_{e}^{t}(F(\chr{S^{t}+e})-F(\chr{S^{t}}))=\frac{1}{\Sigma}\sum_{e\in E}p_{e}x_{e}^{t}w_{e}.
\]
 Thus, by Lemma \ref{lem:martingale}, the final solution $S^{\tau}$
produced by the algorithm satisfies 
\[
\ex{f(S^{\tau})}\ge\left(1-\frac{1}{e}\right)\left(\frac{1}{\kout+\kin+1}\right)\ex{OPT}.
\]

\bibliography{submodularprobing}

\end{document}